\newtheorem{theorem}{Theorem}[section]
\newtheorem{lemma}[theorem]{Lemma}
\newenvironment{lenumerate}[2][]
{\begin{enumerate}[label=(#2\arabic*),leftmargin=0.2in,itemindent=0.15in,#1]}
{\end{enumerate}}
\setlist*[enumerate,1]{label={\itshape\arabic*)}}
\newcommand{\paragraphswithstop}{%
\let\copyparagraph\paragraph%
\renewcommand\paragraph[1]{\copyparagraph{##1.}}%
}
\newcommand{\real}[1]{\mathbb{R}^{#1}{}}
\newcommand{\bmat}[1]{\begin{bmatrix}#1\end{bmatrix}}
\newcommand{\transpose}{^\mathrm{T}}
\DeclarePairedDelimiter{\abs}{\lvert}{\rvert}
\DeclarePairedDelimiter{\norm}{\lVert}{\rVert}
\newcommand{\vct}[1]{\mathbf{#1}}
\DeclareMathOperator{\stack}{stack}
\newcommand{\iV}[1][]{{i \in V_{#1}}}
\newcommand{\ijE}[1][]{(i,j) \in E_{#1}}
\providecommand{\cU}{\mathcal{U}}
\newcommand{\newcolorlabel}[2]{%
  \expandafter\newcommand\csname #1\endcsname[1]{%
    \colorbox{#2}{\color{white}\textsf{\textbf{##1}}}}%
}
\newcommand{\newcommenter}[2]{%
  \expandafter\newcommand\csname #1\endcsname[1]{%
    \fcolorbox{#2}{#2}{\color{white}\textsf{\textbf{#1}}}
    {\color{#2}##1}}%
  \expandafter\newcommand\csname at#1\endcsname{%
    \fcolorbox{#2}{#2}{\color{white}\textsf{\textbf{@#1}}}
    {\color{#2}}}%
  \expandafter\newcommand\csname #1hl\endcsname[2]{%
    \colorbox{#2}{\color{white}\textsf{\textbf{#1}}}\sethlcolor{Azure2}\hl{##2}~%
    \expandafter\ifx\csname commentarrow\endcsname\relax$\leftarrow$\else \commentarrow[#2]\fi~%
    {\color{#2}##1}}%
  \expandafter\newcommand\csname #1st\endcsname[2]{%
    \colorbox{#2}{\color{white}\textsf{\textbf{#1}}}\sout{##2}~%
    \expandafter\ifx\csname commentarrow\endcsname\relax$\leftarrow$\else \commentarrow[#2]\fi~%
    {\color{#2}##1}}%
}
\tikzset{
  dim above/.style={to path={\pgfextra{
        \pgfinterruptpath
        \draw[>=latex,|->|] let
        \p1=($(\tikztostart)!1.5em!90:(\tikztotarget)$),
        \p2=($(\tikztotarget)!1.5em!-90:(\tikztostart)$)
        in(\p1) -- (\p2) node[pos=.5,sloped,above]{#1};
        \endpgfinterruptpath
      }
    }
  },
  dim double above/.style={to path={\pgfextra{
        \pgfinterruptpath
        \draw[>=latex,|->|] let
        \p1=($(\tikztostart)!3em!90:(\tikztotarget)$),
        \p2=($(\tikztotarget)!3em!-90:(\tikztostart)$)
        in(\p1) -- (\p2) node[pos=.5,sloped,above]{#1};
        \endpgfinterruptpath
      }
    }
  },
  dim below/.style={to path={\pgfextra{
        \pgfinterruptpath
        \draw[>=latex,|->|] let 
        \p1=($(\tikztostart)!-1em!-90:(\tikztotarget)$),
        \p2=($(\tikztotarget)!-1em!90:(\tikztostart)$)
        in (\p1) -- (\p2) node[pos=.5,sloped,below]{#1};
        \endpgfinterruptpath
      }
    }
  },
}
\tikzset{
    right angle quadrant/.code={
        \pgfmathsetmacro\quadranta{{1,1,-1,-1}[#1-1]}     
        \pgfmathsetmacro\quadrantb{{1,-1,-1,1}[#1-1]}},
    right angle quadrant=1, 
    right angle length/.code={\def\rightanglelength{#1}},   
    right angle length=2ex, 
    right angle symbol/.style n args={3}{
        insert path={
            let \p0 = ($(#1)!(#3)!(#2)$) in     
                let \p1 = ($(\p0)!\quadranta*\rightanglelength!(#3)$), 
                \p2 = ($(\p0)!\quadrantb*\rightanglelength!(#2)$) in 
                let \p3 = ($(\p1)+(\p2)-(\p0)$) in  
            (\p1) -- (\p3) -- (\p2)
        }
    }
}
\newcommand{\pgfextractangle}[3]{%
    \pgfmathanglebetweenpoints{\pgfpointanchor{#2}{center}}
                              {\pgfpointanchor{#3}{center}}
    \global\let#1\pgfmathresult  
}
\newcommand{\commentarrow}[1][Azure4]{\tikz[baseline=-3pt]{\node[shape border uses incircle, fill=#1,rotate=180,single arrow, inner sep=1pt, minimum size=6pt, single arrow head extend=2pt]{};}}
\tikzset{ax/.style={-latex,line width=2pt}}
\tikzset{camera/.style={fill=Sienna1,fill opacity=0.5},%
image plane/.style={draw=RoyalBlue3,line width=2pt}}
\newcommand{\Xopt}{X^{\textrm{opt}}}
\newcommand{\cXopt}{\mathcal{X}^{\textrm{opt}}}
\DeclareMathOperator{\Ver}{Ver}
\title{\LARGE \bf
A Computational Theory of Robust Localization Verifiability in the Presence of Pure Outlier Measurements
}
\author{Mahroo Bahreinian, Roberto Tron
  \thanks{This work was supported by the National Science Foundation grant NSF NRI-1734454.}
\thanks{R. Tron is with Faculty of Mechanical Engineering and Systems Engineering,
       Boston University, Boston, MA 02215, USA
        {\tt\small tron @bu.edu}}%
\thanks{M. Bahreinian is with the Department of Systems Engineering, Boston University, Boston, MA 02215, USA 
       {\tt\small mahroobh@bu.edu}
}
}
\begin{document}

\thispagestyle{empty}
\pagestyle{empty}
\maketitle


\begin{abstract}
The problem of localizing a set of nodes from relative pairwise measurements is at the core of many applications such as Structure from Motion (SfM), sensor networks, and Simultaneous Localization And Mapping (SLAM). In practical situations, the accuracy of the relative measurements is marred by noise and outliers; hence, we have the problem of quantifying how much we should trust the solution returned by some given localization solver. In this work, we focus on the question of whether an $\ell_1$-norm robust optimization formulation can recover a solution that is identical to the ground truth, under the scenario of translation-only measurements corrupted exclusively by outliers and no noise; we call this concept \emph{verifiability}. On the theoretical side, we prove that the verifiability of a problem depends only on the topology of the graph of measurements, the edge support of the outliers, and their signs, while it is independent of ground truth locations of the nodes, and of any positive scaling of the outliers. On the computational side, we present a novel approach based on the dual simplex algorithm that can check the verifiability of a problem, completely characterize the space of equivalent solutions if they exist, and identify subgraphs that are verifiable. As an application of our theory, we provide a procedure to compute \emph{a priori} probability of recovering a solution congruent or equivalent to the ground truth given a measurement graph and the probabilities of each edge containing an outlier.

\end{abstract}


\section{Introduction}
The problem of localizing a set of agents or nodes with pairwise relative measurements can be modeled as a \textit{pose graph} \cite{posegraph}, where the nodes are associated to vertices and pairwise relative measurements are associated to edges. Typical solutions are cast as maximizing the likelihood of the relative pairwise measurements given the estimated agent poses, possibly after choosing different statistical models that lead to different cost functions to be optimized; this approach has been referred to as Pose Graph Optimization (PGO) \cite{PGO} 
Different versions of this problem have been of interest in a number of fields. In computer vision, the Structure from Motion (SfM) problem \cite{SfM} aims to recover the location and orientation of cameras, and the location of 3-D points in the scene, given an unordered collection of 2D images. In sensor networks, the nodes need to be localized from relative translation or distance measurements \cite{Biswas,Cucuringu2012,Eren2003}. In robotics, the Simultaneous Localization And Mapping (SLAM) \cite{SLAM2,Grisetti2010} problem aims to recover the pose trajectories of one or more mobile agents, while building a map of the environment, using multimodal measurements (extracted from images or inertial measurement units). In all these applications, pairwise measurements are generally corrupted by a combination of small-magnitude noise and large-magnitude outliers, due to hardware, environmental, and algorithmic factors~\cite{Yang2013}. 

The simplest and most common objective employed in PGO is the least square error \cite{l2norm,l2norm_2}, which corresponds to the assumption that measurements are affected by Gaussian noise (typically having low variance). However, the solution of least square optimization can be greatly impacted by the presence of outliers (one or two isolated outliers can bias the solution for all the nodes). In \cite{stable_camera_motion, surveyofSFM}, the authors estimate the location of the nodes (with relative direction measurements) by minimizing a least square objective function with global scale constraints through a semi-definite relaxation (SDR), while  \cite{Tron1,Tron2} solve a similar problem through constrained gradient descent; in both cases, although some theoretical analysis of the robustness of the method to noise is given, the resulting methods are not robust to outliers (due to the use of the least squares cost). To obtain robustness, a possible approach is to use a pre-processing stage (e.g., using Bayesian inference or other mechanisms) to pre-process the measurements and remove outliers, followed by PGO \cite{l_inf, estimation, zach2010, inproceedings, Yang2013}.
An alternative or complementary method is to optimize robust (ideally convex) cost functions, such as the Least Unsquared Deviation (LUD) \cite{LUD,Shapefit} or others \cite{Zach}; 
in this case, the optimization can be carried out using re-weighting techniques (such as Iterative Reweighted Least Squares, IRLS \cite{IRLS} or others \cite{6630557, 6385590}), or Alternate Direction Method of Multipliers (ADMM, \cite{ADMM,admmEx}).
In all these robust approaches, it has been shown empirically that the results are close to the ground truth even in the presence of outliers; however, there have been no published attempts to characterize, in a precise way, what kind of situations can be tolerated by the solvers. The reader should contrast this, for example, to the simple case of the median in statistics, where it is well known that such estimator is robust up to 50 percent of outliers \cite{venables2013modern, mitiche2013computational, crosillaadvanced}.


The goal of this paper is to obtain results for PGO that are similar in spirit to those available for the median in classical robust estimation theory. In order to obtain strong theoretical results on the effect of outliers alone, in this paper we focus on the case where we are interested in recovering only translations (not rotations), and there is no Gaussian noise (i.e., each measurement is either perfect, or corrupted by an outlier of arbitrarily high, but bounded, magnitude); we plan to extend our results to more realistic situations in our future work. As the objective function in the optimization, we use the least absolute value deviation ($\ell_1$-norm), which is convex and allows us to bring the extensive tools from linear optimization to our disposal.
Under these conditions, it can be empirically noticed that the robustness of the $\ell_1$ cost function leads to three possible outcomes: the solution found by the solver and the ground truth are either congruent; different, but with the same value for the cost; or drastically different. Moreover, this categorization appear to depend on where the outliers are situated, but not on their absolute magnitude.
We formalize this observation in the notion of \emph{verifiability} for a graph. Given an hypothesis for the edge support of the outliers and their sign, we can use convex optimization theory to predict whether solving the $\ell_1$ optimization problem can recover the ground truth solution, whether this can be done uniquely, and, if not, completely characterize the set of solutions, while identify which subsets of the graph can be exactly recovered.
From this, and by knowing the probability of each edge to be an outlier with a given sign, we can then compute the probability that the recovered solution is completely or partially congruent to the ground truth embedding (without knowing the actual support of the outliers). Moreover, the procedure can be extended to identify subgraphs that can be uniquely localized with high probability.

\section{Notation And Preliminaries}
In this section we formally define our measurement model, the optimization problem for localizing nodes from relative measurements, and we define the notion of \emph{verifiability}.
\subsection{Graph Model}
\begin{definition}\label{def:graph}
A sensor network is modeled as an oriented graph $G=(V,E)$, where $V=\{1,\hdots,N\}$ represents the set of sensors, and $E\subset V\times V$ represents the pairwise relative measurements; we have $\ijE$ if and only if there is a measurement between node $i \in V$ and node $j \in V$. We assume that $G$ is connected. We use $\abs{V}$, $\abs{E}$ to indicate the cardinality of the sets $V$ and $E$, respectively.

\end{definition}

\begin{definition}
An \emph{embedding} of the graph associates each node $i$ to a position $\vct{x}_i\in \real{d}$.
Mathematically, we identify an embedding with a matrix $\vct{X}_V=\bmat{\vct{x}_1 & \hdots & \vct{x}_{\abs{V}}} \in \mathbb{R}^{\abs{V}\times d}$,  with $d$ being the ambient space dimension; we denote the ground truth embedding as~$\vct{X}^\ast_V$.
\end{definition}

\begin{definition}
A measurement between node $i$ and $j$, $\ijE$, is modeled as
\begin{equation}\label{measurements}
    t_{ij}=\vct{x}^*_j-\vct{x}^*_i+\epsilon_{ij},
\end{equation}
where $\vct{x}^*_j-\vct{x}^*_i$ is the true translation between nodes $i$ and $j$, and $\epsilon_{ij}$ is a random variable for outliers with distribution
\begin{equation}\label{outliers}
    \epsilon_{ij}=\begin{cases}
                    0, & \textrm{w.p.  } 1-p_{ij}^+-p_{ij}^-\\
                    \cU^{-},  & \textrm{w.p.  } p_{ij}^{-}\\
                    \cU^+,  & \textrm{w.p.  } p_{ij}^{+}\\
                  \end{cases},
\end{equation}
where $p_{ij}^-,p_{ij}^+\in (0,1)$ are a priori probabilities of having an outlier for the edge $(i,j)$ with, respectively, negative or positive support, and $\cU^-,\cU^+$ are stochastic functions that returns a samples from a uniform distribution with arbitrary, but finite, non-zero support contained in, respectively, $\real{}_{<0},\real{}_{>0}$. If $d>1$, we assume that the entries of the vector $\epsilon_{ij}$ are i.i.d. with the same distribution \eqref{outliers}.
\end{definition}

We assume that the probabilities $p_E=\{p_{ij}\}_{\ijE}$ are known; as shown below in Theorem \ref{theorem_verifiability}, our results are valid independently of the support for $\cU^\pm$ (as long as it is finite). 

From this point on, subscripts with $V$ or $E$ refer to the vector obtained by stacking the specified quantity considered for all nodes or edges (e.g., $p_E=\stack(\{p_{ij}\}_{\ijE})$).

\begin{definition}We define the \emph{outlier support} $E_\epsilon\subset E$ such that $E_\epsilon=\{(i,j)\in E: \epsilon_{ij}\neq 0\}.$
\end{definition}

\subsection{Localization Through Robust Optimization}
Given the relative pairwise measurements $t_E$ in the graph $G$, we aim to find and characterize all the embeddings that minimize the sum of all absolute residuals, i.e.,
\begin{equation}\label{eq:l-one-opt}
\begin{aligned}
\min_{\vct{X}_V, \vct{x}_1=\vct{0}}
\sum_{(i,j)\in E} \norm{\vct{x}_j-\vct{x}_i-t_{ij}}_1\\.
\end{aligned}
\end{equation}
\subsection{Global Translation Ambiguity}
If we translate all the points in the embedding by a common translation, the cost \eqref{eq:l-one-opt} does not change, since the relative measurements also remain constant.
Without loss of generality, we fix this translation ambiguity by choosing a global reference frame such that $\vct{x}_1^\ast=\vct{x}_1=\vct{0}_d$. Since we assumed that the graph is connected (Definition~\ref{def:graph}), fixing $\vct{x}_1$ alone is sufficient to fix the global translation.
For simplicity's sake, we keep $\vct{x}_1$ as a variable in the optimization problem \eqref{eq:l-one-opt} even though it is used to fix the global translational ambiguity.

\subsection{Set of Global Optimizers $\cXopt$}\label{sec:set-global-optimizers}
We define as $\cXopt$ the set of local minimizers of \eqref{eq:l-one-opt}. Since the objective function is convex (being the sum of convex functions), we have that $\cXopt$ is convex, and is exactly given by the set of global minimizers (see \cite[Theorems 8.1, 8.3]{convexity_of_optimal_set}).  Moreover, using the fact that the value of $\vct{x}_1$ is fixed and that the graph is connected, it is possible to show that the objective function in \eqref{eq:l-one-opt} is radially unbounded, and therefore the set $\cXopt$ is compact. In fact, since \eqref{eq:l-one-opt} can be rewritten as a Linear Program (LP, see below), $\cXopt$ either reduces to a single point, or is a polyhedron with a finite number of \emph{corners} (we use this term instead of \emph{vertex} as a distinction from the individual elements of $V$).

\subsection{Verifiability}
If $E_\epsilon=\emptyset$, then $t_E$ is identical to the true measurements, and the solution of \eqref{eq:l-one-opt} would be equal to the ground truth embedding $\vct{X}^\ast_V$. However, since \eqref{eq:l-one-opt} is a robust optimization problem, the optimum value could still correspond to $\vct{X}^\ast_V$ even in the presence of outliers ($E_\epsilon\neq \emptyset$). In the latter case, however, there could be multiple minimizers all giving the same value of the $\ell_1$ objective. We start formalizing the situation with the following.
\begin{definition}\label{def:verifiability}
    A \emph{(localization) problem} is defined by a pair of a graph $G=(V,E)$ and a \emph{signed outlier support $E_\epsilon^{\pm}\subset E \times \{+,-\}$} (i.e., a subset of edges paired with signs). A problem is said to be \emph{uniquely verifiable} if  $\cXopt=\vct{X}^*_V$ (unique solution), \emph{verifiable} if $\vct{X}^*_V\in\cXopt$ (possible multiple equivalent solutions), and \emph{non-verifiable} otherwise.
  \end{definition}
  Note that, according to the definitions, uniquely verifiable problems are also verifiable.\\

In \cite[Theorem 2]{Yang2013}, the authors also introduce the concept of verifiable edge and verifiable graph; however, that work considers only the case of a single outlier ($\abs{E_\epsilon^\pm}=1$). In this work, we generalize the same notion to more general cases.


\section{Canonical LP Form And Verifiability}
In this section we perform a series of transformations to the optimization problem \eqref{eq:l-one-opt} to reduce it to a canonical, one-dimensional LP (and its dual), allowing us to deduce that particular ground-truth embeddings $X_V^\ast$ and outlier magnitudes $\epsilon_E$ do not affect the verifiability of a problem, thus ensuring that Definition~\ref{def:verifiability}, which depends only on the graph topology and the signed outlier support, is well posed.

\subsection{Canonical Form}
We first perform a change of variable so that the true embedding corresponds to the point at the origin. More in detail, we define a set of new variables $\vct{X}'_V$ such that
\begin{equation} \label{eq:canonical1}
\vct{X}_V'=X_V-\vct{X}^\ast_V,
\end{equation}
i.e., for each $\iV$ we replace $\vct{x}_i$ by $\vct{x}'_i+\vct{x}_i^\ast$. If $\vct{X}^\ast$ is an optimal point for \eqref{eq:l-one-opt}, then $\vct{X}'=\vct{0}_{\abs{V}}$ is a minimizer for the following transformed problem:

\begin{equation}\label{eq:canonical2}
\begin{aligned}
\min_{\vct{x}'_V, \vct{x}'_1=0}
\sum_{(i,j)\in E} \norm{(\vct{x}'_j+\vct{x}_j^\ast)-(\vct{x}'_i+\vct{x}_i^\ast)-(\vct{x}^*_j-\vct{x}^*_i+\epsilon_{ij})}_1 ,
\end{aligned}
\end{equation}
which reduces to
\begin{equation}\label{eq:canonical_optimization}
\begin{aligned}
\min_{\vct{x}'_V,\vct{x}'_1=0} \sum_{(i,j)\in E} \norm{\vct{x}'_j-\vct{x}'_i-\epsilon_{ij}}_1.
\end{aligned}
\end{equation}

By inspecting \eqref{eq:canonical_optimization}, we can deduce the following:
\begin{lemma}\label{lemma:canonical}
The canonical form of the optimization problem, and the definition of verifiability, do not depend on the specific value of $\vct{X}^*_V$.
\end{lemma}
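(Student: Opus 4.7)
The plan is to exploit the affine bijection between the original and shifted variables established by \eqref{eq:canonical1}, and then to observe that the canonical objective \eqref{eq:canonical_optimization} contains no reference to $\vct{X}^*_V$ whatsoever. The statement is really a consequence of the substitution that has already been performed in the excerpt; the proof amounts to recording the consequences of that substitution carefully.

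First, I would check that the change of variable $\vct{x}'_i = \vct{x}_i - \vct{x}^*_i$ is compatible with the gauge-fixing constraint $\vct{x}_1 = \vct{0}_d$. Since the paper has explicitly chosen the reference frame so that $\vct{x}^*_1 = \vct{x}_1 = \vct{0}_d$, the shifted constraint $\vct{x}'_1 = \vct{x}_1 - \vct{x}^*_1 = \vct{0}_d$ coincides with the constraint imposed in \eqref{eq:canonical_optimization}. Therefore the feasible set of \eqref{eq:l-one-opt} is mapped bijectively onto the feasible set of \eqref{eq:canonical_optimization} by the translation $\vct{X}_V \mapsto \vct{X}_V - \vct{X}^*_V$, and this bijection is independent of any specific choice of the remaining entries of $\vct{X}^*_V$.

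Next, I would verify term-by-term that, after substituting \eqref{eq:canonical1} into the objective of \eqref{eq:l-one-opt}, each summand $\norm{(\vct{x}'_j+\vct{x}^*_j)-(\vct{x}'_i+\vct{x}^*_i)-(\vct{x}^*_j-\vct{x}^*_i+\epsilon_{ij})}_1$ collapses to $\norm{\vct{x}'_j-\vct{x}'_i-\epsilon_{ij}}_1$, as already shown in \eqref{eq:canonical2}--\eqref{eq:canonical_optimization}. Thus the canonical cost depends only on the edge set $E$ and on the realized outlier values $\epsilon_E$, with no dependence on $\vct{X}^*_V$, which settles the first half of the claim.

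For the verifiability half, the map $\vct{X}_V \mapsto \vct{X}'_V$ preserves the objective value (up to the rewriting above) and is an affine bijection, so the minimizer set $\cXopt$ of \eqref{eq:l-one-opt} is exactly the translate by $\vct{X}^*_V$ of the minimizer set of \eqref{eq:canonical_optimization}. Consequently, $\vct{X}^*_V \in \cXopt$ if and only if $\vct{0}$ belongs to the minimizer set of the canonical problem, and uniqueness is similarly preserved. Since the latter set is determined by $G$ and $\epsilon_E$ alone, the classification of the problem as uniquely verifiable, verifiable, or non-verifiable in the sense of Definition~\ref{def:verifiability} cannot depend on $\vct{X}^*_V$. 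The argument is essentially bookkeeping, with no substantive obstacle; the only subtle point is ensuring compatibility of the gauge constraint with the substitution, which is precisely why $\vct{x}^*_1=\vct{0}_d$ was fixed upfront.
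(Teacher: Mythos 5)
Your proposal is correct and follows essentially the same route as the paper: both arguments rest on the observation that the substitution $\vct{X}'_V=\vct{X}_V-\vct{X}^\ast_V$ produces a canonical problem \eqref{eq:canonical_optimization} in which $\vct{X}^\ast_V$ no longer appears, so the minimizer set of \eqref{eq:l-one-opt} is just the translate by $\vct{X}^\ast_V$ of a set determined by $G$ and $\epsilon_E$ alone, and the verifiability classification follows from Definition~\ref{def:verifiability}. Your version is merely more explicit than the paper's (which compares two problems with different true embeddings and notes they share the same canonical form), in particular in checking that the gauge constraint $\vct{x}^\ast_1=\vct{0}_d$ makes the substitution compatible with the constraint $\vct{x}'_1=\vct{0}_d$.
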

\begin{proof}
Assume we have two problems with different true embeddings $\vct{X}^\ast_{V_1}$, $\vct{X}^\ast_{V_2}$, but same graph topology $G$, and the same outlier realization $\varepsilon_E$. The corresponding optimization problem in canonical form \eqref{eq:canonical_optimization} are the same, hence, also their set of solutions (after the change of variable) is the same. The rest of the claim then follows from Definition~\ref{def:verifiability}.
\end{proof}
The practical implication of Lemma~\ref{lemma:canonical} is that we can reason about the verifiability of a problem independently from the specific true positions of nodes. To simplify our discussion, for the remainder of the paper and without loss of generality we use $\vct{x}$ instead of $\vct{x}'$.

\subsection{Reduction to One-Dimensional Problems}\label{sec:dimension-reduction}


The $\ell_1$-norm $\norm{\cdot}_1 \colon \real{d}\rightarrow \real{}$ in the optimization objective can be decomposed into sums of absolute values across dimensions, i.e.,  \eqref{eq:canonical_optimization} becomes
 \begin{equation}\label{eq:summation_absolute_value}
\begin{aligned}
\min_{\vct{X}_V,[\vct{x}_1]_k=0}
& & \sum_{k=1}^{d} \sum_{(i.j)\in E} \bigl\lvert [\vct{x}_j]_k-[\vct{x}_i]_k-[\epsilon_{ij}]_k\bigr\rvert,\\
\end{aligned}
\end{equation}
where $[v]_k$ denotes the $k$-th element of a vector $v\in \real{d}$. The minimization problem \eqref{eq:summation_absolute_value} can then be decomposed into $d$ separate optimization problems, each one with a solution set $[\cXopt]_k$, $k\in \{1,\ldots, d\}$, and each one corresponding to a 1-D localization problem of the form
\begin{equation}\label{eq:directional_canonical}
    \begin{aligned}
\min_{x_V,x_1=0}
\sum_{(i.j)\in E} \mid x_j-x_i-\epsilon_{ij}\mid.
\end{aligned}
\end{equation}
We postpone to Section~\ref{sec:combine-solutions} the discussion of how to combine the results of our analysis from the different dimensions; until that section, we exclusively focus on the 1-D version of the problem.
\subsection{Canonical Linear Program Form}\label{sec:canonical}
In this section, we transform \eqref{eq:directional_canonical} into the equivalent standard Linear Program (LP) form, with a linear cost function subject to linear inequality constraints, and compute its dual. This will allow us to arrive to the conclusion that the exact magnitude of the outliers is not important in terms of verifiability, and only the signed outlier support matter.

We first introduce variables
\begin{equation}\label{Z_abs}
    Z_{ij}=\abs{x_j-x_i-\epsilon_{ij}},\;\; \forall (i,j)\in E,
  \end{equation}
to push the cost function into the constraints.
\begin{subequations}\label{eq:LP1}
    \begin{align}
 \min_{Z_E,x_V, x_1=0} 
 &  \sum_{\ijE}Z_{ij} \\
 \text{subject to }
 & x_{j}-x_{i}-\epsilon_{{ij}}\leq Z_{{ij}}, \\
 &-(x_{j}-x_{i}-\epsilon_{{ij}})\leq Z_{{ij}}, \\
 &Z_{{ij}} \geq 0,\\
& \forall \iV,\ijE.\nonumber
\end{align}
\end{subequations}
Next, in order to obtain a standard LP form, all variables must be non-negative. We therefore split each variable $x_i$ into the summation of two non-negative variables,
\begin{equation}\label{eq:LP2}
    x_i=x^+_i-x^-_i,\;  x^+_i,x^-_i\geq0.
\end{equation}
Finally, we change the inequality constraints into equality constraints by introducing the slack variables $S^+_E,S^-_E$:
\begin{subequations}\label{eq:LP3}
    \begin{alignat}{2}
\min_{Z,x,x_1=0}
 &  \sum_{(i.j)\in E}Z_{ij}, \\
 \text{subject to }
 & x^+_{j}-x^-_{j}-(x^+_{i}-x^-_{i})-\epsilon_{{ij}}+S^+_{{ij}}=Z_{{ij}},\label{eq1:LP3} \\\
& -(x^+_{j}-x^-_{j}-(x^+_{i}-x^-_{i})-\epsilon_{{ij}})+S^-_{{ij}}=Z_{{ij}}, \label{eq2:LP3}\\
& x^+_i,x^-_i,S^+_{ij},S^-_{ij},Z_{ij} \geq 0,\label{eq4:LP3}\\
& \forall \iV,\; \ijE.\nonumber
\end{alignat}
\end{subequations}

\begin{remark}[Value of $S_E$]\label{remark:value-of-ZE}
If we add constraints \eqref{eq1:LP3} and \eqref{eq2:LP3}, we obtain
\begin{equation}\label{S}
    S^+_{ij}+S^-_{ij}=2Z_{ij}.
\end{equation}
Moreover, from \eqref{Z_abs} and \eqref{S},
\begin{equation} \label{Z}
    (S^+_{ij},S^-_{ij})=\begin{cases}
      (2Z_{ij},0),& \textrm{if } Z_{ij}=-(x_j-x_i-\epsilon_{ij})\\
      (0,2Z_{ij}),& \textrm{if } Z_{ij}=x_j-x_i-\epsilon_{ij}.
    \end{cases}
\end{equation}
\end{remark}
We can also form the dual optimization problem of \eqref{eq:LP3},
\begin{subequations}\label{eq:dual}
\begin{alignat}{2}
& \max_{P_{ij}^+, P_{ij}^-}
& & \sum_{(i,j) \in E}{\epsilon_{ij}(P_{ij}^+-P_{ij}^-)}, \label{eq1:dual}\\
& \text{subject to}
& & \sum_{j,(j,i)\in E} (P_{ji}^+-P_{ji}^-)-\sum_{j,(i,j)\in E} (P_{ij}^+-P_{ij}^-)= 0,\;\label{eq2:dual}\\
& & & -P_{ij}^+-P_{ij}^-\leq 1,\label{eq3:dual}\\
&&& P_{ij}^+,P_{ij}^-\leq 0,\label{eq4:dual}\\
&&& \forall\iV,\ijE,\nonumber
\end{alignat}
\end{subequations}
where $P^+_{ij}$ is the dual variable associated to constraint \eqref{eq1:LP3}, and $P^-_{ij}$ is the dual variable associated to constraint \eqref{eq2:LP3}.

\begin{remark}[Strong duality and verifiability]\label{remark:strong-duality-Z-P}
Assume that the localization problem $(G, E_\epsilon)$ is verifiable or uniquely verifiable. Then, the origin is primal optimal, i.e., $\vct{0}_{\abs{V}}\in \cXopt$, and from \eqref{Z_abs}, we have that, at the primal optimal solution $(X^*=0,Z_E^*,S_E^{+*},S_E^{-*})$:
\begin{equation}\label{eq:Z_verifiable}
    \sum_{(i,j)\in E}{Z^*_{ij}}=\sum_{(i,j)\in E}{\abs{\epsilon_{ij}}}=\sum_{(i,j)\in E_\epsilon^\pm}{\abs{\epsilon_{ij}}};
  \end{equation}
  note that, in the last equality, the sum is only over edges in the outlier support.
  
  If a linear programming problem has an optimal solution, so does its dual, and the respective optimal costs are equal; this is known as the strong duality property \cite[Theorem 4.4]{LP}. Combining this observation with \eqref{eq:Z_verifiable}, we have that, for a dual optimal solution $(P_E^{+*},P_E^{-*})$,
\begin{equation}\label{eq:strong-duality-Z-P}
    \sum_{\ijE}Z^*_{ij}=\sum_{\ijE[\epsilon]^\pm}{\epsilon_{ij}(P^{+*}_{ij}-P^{-*}_{ij})}=\sum_{(i,j) \in E_\epsilon^\pm}\abs{\epsilon_{ij}}.
\end{equation}
\end{remark}
\begin{remark}[Discrete optimal solution for dual variables]\label{remark:discrete-P}
  Note that constraints \eqref{eq3:dual} and \eqref{eq4:dual}, together with \eqref{eq:strong-duality-Z-P} imply that the dual optimal solution is given by $(P_{ij}^{+*},P_{ij}^{-*})\in \{(-1,0),(0,-1)\}$, for all $\ijE[\epsilon]^\pm$ (i.e., there are two discrete cases for each edge with outliers, and the selection depends on the sign of $\epsilon_{ij}$), and $-1\leq P_{ij}^{+*},P_{ij}^{-*}\ \leq 0$ for the remaining edges.
\end{remark}

These remarks allow us to prove the following.
 \begin{lemma}
 \label{lemma:scale}
 For a fixed outlier support $E_\epsilon$, if we change the scale of the outliers by positive factor, the verifiability of the graph does not change.
 \end{lemma}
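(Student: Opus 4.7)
The plan is to exploit the positive homogeneity of the canonical one-dimensional cost in \eqref{eq:directional_canonical}. First, I would define $c(x_V; \epsilon_E) := \sum_{(i,j)\in E} |x_j - x_i - \epsilon_{ij}|$ and observe that, for any scalar $\alpha > 0$, the substitution $x_V \mapsto \alpha x_V$ yields $c(\alpha x_V;\, \alpha \epsilon_E) = \alpha\, c(x_V; \epsilon_E)$, since absolute values are positively homogeneous of degree one. The anchor constraint $x_1 = 0$ is preserved under the same dilation (because $\alpha \cdot 0 = 0$), so the feasible set of \eqref{eq:directional_canonical} is mapped bijectively onto itself by $x_V \mapsto \alpha x_V$. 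From this I would conclude the set identity $\cXopt(\alpha \epsilon_E) = \alpha \cdot \cXopt(\epsilon_E)$.

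Next, I would note that because $\alpha > 0$, the sign of each $\epsilon_{ij}$ is preserved, so the signed outlier support $E_\epsilon^\pm$ (and hence the localization problem in the sense of Definition \ref{def:verifiability}) is identical before and after scaling. The two verifiability conclusions then reduce to one-liners built on the set identity: $0 \in \cXopt(\epsilon_E)$ iff $0 = \alpha \cdot 0 \in \cXopt(\alpha \epsilon_E)$, which gives verifiability; and $\cXopt(\epsilon_E) = \{0\}$ iff $\cXopt(\alpha \epsilon_E) = \alpha\{0\} = \{0\}$, which gives unique verifiability.

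To lift the argument from the one-dimensional reduction to the original $d$-dimensional problem, I would invoke the coordinate decomposition from Section \ref{sec:dimension-reduction}: the $d$-dimensional problem is verifiable iff each of its $d$ one-dimensional projections is verifiable, and the homogeneity argument above applies coordinate by coordinate. Combined with Lemma \ref{lemma:canonical}, this shows that verifiability depends neither on $X_V^\ast$ nor on the absolute magnitude of $\epsilon_E$.

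As an independent sanity check, I would also verify the claim through the dual LP \eqref{eq:dual}: its feasible region \eqref{eq2:dual}--\eqref{eq4:dual} is independent of $\epsilon_E$, while its objective \eqref{eq1:dual} scales linearly in $\alpha$; by strong duality (Remark \ref{remark:strong-duality-Z-P}), the verifiability certificate $\sum Z^*_{ij} = \sum |\epsilon_{ij}|$ is homogeneous of degree one in $\alpha$ and therefore unaffected by the rescaling. I do not anticipate any real obstacle here—the argument is almost purely algebraic—and the only subtlety is the strict positivity of $\alpha$, which is required both so that absolute values factor cleanly through the dilation and so that the signed support $E_\epsilon^\pm$ is preserved.
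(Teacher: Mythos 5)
Your argument is correct for the statement as you read it---a single global factor $\alpha>0$ applied to every outlier---and it is genuinely different from, and in one respect stronger than, the paper's proof. The paper does not use homogeneity of the primal cost at all: it fixes a primal optimal solution $(X_V^*=0,Z_E^*,S_E^*)$ and a dual optimal solution $(P_E^{*+},P_E^{*-})$, observes that the dual feasible region is independent of $\epsilon_E$ so the old dual point remains feasible, and checks that the edge-wise rescaled primal point $(0,\{u_{ij}Z_{ij}^*\},\{u_{ij}S_{ij}^*\})$ and the old dual point have equal cost $\sum_{\ijE} u_{ij}\abs{\epsilon_{ij}}$, so both are optimal and $0$ remains a minimizer. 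Your change-of-variables route is more elementary and buys something the paper explicitly leaves open: the identity $\cXopt(\alpha\epsilon_E)=\alpha\cdot\cXopt(\epsilon_E)$ preserves \emph{unique} verifiability as well, whereas the authors remark after Theorem~\ref{theorem_verifiability} that their duality proof does not settle that case.

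The caveat, and it matters: the paper's proof rescales each outlier by its \emph{own} factor $u_{ij}>0$, and this per-edge generality is exactly what Theorem~\ref{theorem_verifiability} needs---two outlier vectors with the same signed support are related by a per-edge positive rescaling, but in general not by a single scalar. Your homogeneity identity $c(\alpha x_V;\alpha\epsilon_E)=\alpha\, c(x_V;\epsilon_E)$ collapses when different edges carry different factors, since no single dilation of $x_V$ can absorb all of them simultaneously. So if the lemma is read in the per-edge sense (as the paper's proof, and its subsequent use in Theorem~\ref{theorem_verifiability}, require), your primal argument does not cover it; only your dual ``sanity check'' survives, and to turn that into a proof you would have to promote it to the paper's full argument by exhibiting a primal feasible point whose cost matches the still-feasible dual certificate edge by edge. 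As written, your proposal correctly proves a strictly weaker statement than the one the paper actually relies on.
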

 \begin{proof}
   Assume that the localization problem $(G,E_\epsilon^\pm)$ is verifiable or uniquely verifiable, and that $(X_V^*=0,Z_E^*,S_E^*)$ is a primal optimal solution, while $(P_E^{*+},P_E^{*-})$ is a dual optimal solution.
   If we replace each outlier $\epsilon_{ij}$ with a positively scaled version $u_{ij}\epsilon_{ij}$, $u_{ij}>0$, $\ijE$ (the case $u_{ij}=0$ is excluded, otherwise the outlier support would change), the cost function in \eqref{eq:dual} changes, but not the constraints, so $(P_E^{*+},P_E^{*-})$ is still a dual feasible solution. Considering the second equality in \eqref{eq:strong-duality-Z-P} from Remark~\ref{remark:strong-duality-Z-P} together with Remark~\ref{remark:discrete-P}, we have that the new dual cost after rescaling is
   \begin{equation}\label{eq:rescaled-cost-P}
     \sum_{\ijE}{u_{ij}\epsilon_{ij}(P^{+*}_{ij}-P^{-*}_{ij})}=\sum_{{\ijE}_\epsilon^\pm}u_{ij}\abs{\epsilon_{ij}}.
   \end{equation}

   At the same time, the solution $(X_V^*=0,\{u_{ij}Z_{ij}^*\}_{\ijE},\{u_{ij}S_{ij}^*\}_{\ijE})$ is primal feasible, and the corresponding cost is
   
   \begin{equation}\label{eq:rescaled-cost-Z}
    \sum_{\ijE}u_{ij}Z^*_{ij}=\sum_{(i,j) \in E}u_{ij}\abs{\epsilon_{ij}}.
  \end{equation}

  From \eqref{eq:rescaled-cost-P} and \eqref{eq:rescaled-cost-Z} together with strong duality, we can therefore conclude that $(X_V^*=0,\{u_{ij}Z_{ij}^*\}_{\ijE},\{u_{ij}S_{ij}^*\}_{\ijE})$ (respectively, $(P_E^{*+},P_E^{*-})$) is primal (respectively, dual) optimal.
  This shows that $X_V^*=0$ is an optimal solution, and the rescaled problem is again verifiable; hence, one problem is verifiable if and only if all the positive scaled versions are also verifiable.
\end{proof}
Combining lemmata \ref{lemma:canonical} and \ref{lemma:scale} we have the following:
\begin{theorem}
\label{theorem_verifiability}
The notion of verifiability depends only on the graph topology $G$, the support of the outliers $E_\varepsilon$ , and the sign of the outliers.
\end{theorem}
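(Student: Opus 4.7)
The plan is to consolidate the two preceding lemmas, since together they already capture every dependence the theorem rules out. First I would invoke the decomposition from Section~\ref{sec:dimension-reduction}: the $d$-dimensional $\ell_1$ problem \eqref{eq:canonical_optimization} splits coordinate-wise into $d$ independent one-dimensional problems of the form \eqref{eq:directional_canonical}, each with its own signed outlier support derived from the componentwise signs of $\epsilon_E$. Since a $d$-dimensional solution set $\cXopt$ is a product of the $d$ one-dimensional solution sets $[\cXopt]_k$, the $d$-dimensional problem is (uniquely) verifiable if and only if every one-dimensional subproblem is, so it suffices to prove the claim in the one-dimensional case.

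Next I would consider two one-dimensional instances $(G,\vct{X}^{\ast (1)}_V,\epsilon^{(1)}_E)$ and $(G,\vct{X}^{\ast (2)}_V,\epsilon^{(2)}_E)$ sharing the same graph topology $G$ and the same signed outlier support $E_\epsilon^\pm$ (i.e., $\epsilon^{(1)}_{ij}$ and $\epsilon^{(2)}_{ij}$ vanish on the same edges and have the same sign elsewhere), but possibly different ground-truth embeddings and different outlier magnitudes. By Lemma~\ref{lemma:canonical}, replacing $\vct{X}^{\ast (1)}_V$ by $\vct{X}^{\ast (2)}_V$ preserves verifiability since the canonical form \eqref{eq:canonical_optimization} only sees $\epsilon_E$. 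It then remains to pass from $\epsilon^{(1)}_E$ to $\epsilon^{(2)}_E$, which differ only in magnitude on the edges of $E_\epsilon^\pm$. For each $(i,j)\in E_\epsilon^\pm$ define $u_{ij}=\epsilon^{(2)}_{ij}/\epsilon^{(1)}_{ij}>0$ (well-defined since both numerator and denominator are nonzero with the same sign), and set $u_{ij}=1$ otherwise. Applying Lemma~\ref{lemma:scale} with these factors carries the first instance to the second while preserving verifiability.

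Combining the two steps yields that any two instances agreeing on $(G,E_\epsilon^\pm)$ share the same verifiability status, which is the theorem. I do not foresee any real obstacle: the proof is essentially a bookkeeping exercise that glues Lemmas~\ref{lemma:canonical} and~\ref{lemma:scale} together, with the dimension-reduction of Section~\ref{sec:dimension-reduction} supplying the reduction to the one-dimensional setting for which those lemmas were stated. The only subtlety worth flagging explicitly is that, in the multidimensional case, ``the sign of the outliers'' must be read componentwise, so that the signed support $E_\epsilon^\pm$ implicitly refers to one signed support per coordinate when $d>1$.
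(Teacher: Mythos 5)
Your proposal is correct and matches the paper's own argument, which consists precisely of combining Lemma~\ref{lemma:canonical} and Lemma~\ref{lemma:scale} (the paper states the theorem with the one-line justification ``Combining lemmata \ref{lemma:canonical} and \ref{lemma:scale}''). Your write-up merely makes the bookkeeping explicit --- the per-edge factors $u_{ij}=\epsilon^{(2)}_{ij}/\epsilon^{(1)}_{ij}>0$ and the coordinate-wise reduction --- which is a faithful elaboration rather than a different route.
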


Technically speaking, the proof above does not cover the case of unique verifiability, in the sense that the they do not exclude the case where a verifiable problem might become uniquely verifiable after rescaling (or viceversa). We are investigating this issue in our current work.


\section{Verifiability computation}
\subsection{Linear Programming}
In this section, we discuss how the dual simplex algorithm can be used to compute the verifiability of a given problem. As a result of the previous section, for our analysis, the values of $\epsilon_E$ can be choosen randomly, as long as they have the correct edge support $E_\varepsilon^\pm$. We start by rewriting the LP \eqref{eq:LP3} in matrix form:

\begin{equation}\label{eq:Opt Matrix Form}
        \begin{aligned}
& \min_q
& &  c\transpose q \\
& \text{subject to}
& & Aq=b \\
&&& q \geq 0.
\end{aligned}
\end{equation}
The vector $c=\stack(\vct{0}_{2\abs{V}},\vct{1}_{\abs{E}},\vct{0}_{2\abs{E}})$ contains the set coefficients in the cost function, while $A\in \{0,1,-1\}^{2\abs{E}\times (2\abs{V}+3\abs{E})}$, and  $b=\left[\begin{smallmatrix} 1\\-1\end{smallmatrix}\right] \otimes \epsilon_E$ defines the constraints (where $\otimes$ denotes the Kronecker's product). Finally, the vector $q=\stack(x_V^+,x_V^-,Z_E,S_E^+,S_E^-) \in \mathbb{R}^{2\abs{V}+3\abs{E}}$ contains the decision variables.

Given the standard form of the optimization problem \eqref{eq:Opt Matrix Form}, we can use the dual simplex algorithm \cite{LP} to find all the corners of the set of minimizers $\cXopt$. The algorithm and its application to our problem are summarized next.



 \subsection{Localization Via the Dual Simplex Method}
The dual simplex method is based on the following concepts:
\begin{enumerate}
    \item Basic variables (BVs): a subset of variables ($q_B$), that, together with the constraints, defines the current candidate solution in the algorithm. Non-basic variables (NBV) are always zero.
    \item Simplex tableau: a $(2\abs{E}+1)\times (2|V|+3|E|-1)$ array where
    \begin{itemize}
        \item The zeroth column represents the value of the set of basic variables ($q_B$). It is initialized with the vector $b$.
        \item The zeroth row contains the \emph{reduced costs}, which are defined as the penalty cost for introducing one unit of the variable $q_i$ to the cost. These are initialized with the vector $c$.
        \item Columns one to $2(|V|-1)+3|E|$ are each one associated with one variable, where we excluded the columns corresponding to $x^+_1,x^-_1$, since $x_1$ is fixed in the optimization. These columns are initialized with the matrix $A$.
    \end{itemize}
\end{enumerate}

For our initial estimated solution, we set all variables to zero except the slack variables; as a result, our initial BVs correspond to the set of slack variables, while the rest are NBVs. See Fig.~\ref{tab:initial_tableau} for an illustration of the initial tableau.
\begin{figure}
\begin{tabular}{c|ccccll|}
\multicolumn{1}{c}{\rule[-0.7em]{0pt}{0pt}} & $0$-th col. & \multicolumn{1}{l}{$x^+_V$} & \multicolumn{1}{l}{$x^-_V$} & \multicolumn{1}{l}{${Z_E}$}    & $S_E^+$     & \multicolumn{1}{c}{$S_E^-$}     \\ \cline{2-7}
\rule[1.3em]{0pt}{0pt} $0$-th rowx               & 0                         & $\vct{0}_V$            & $\vct{0}_V$            & $\vct{1}_E$                            &$\vct{0}_E$                   & $\vct{0}_E$                   \\ 
$q_{B(1)}$           & $b(1)$ &       $\mid$              &       $\mid$                         &          $\mid$                                    &        $\mid$                          &         $\mid$                          \\
$q_{B(2)}$          & $b(2)$ & $a_{x^+_V}$          & $a_{x^-_V}$              & $a_{Z_E}$                                 &$a_{S^+_E}$                 & $a_{S^-_E}$                      \\
$\vdots$        & $\vdots$  &        $\mid$                    &       $\mid$                      &      $\mid$                                           &         $\mid$                         &        $\mid$                            \\
  $q_{B(2\abs{E})}$  & \rule[-0.7em]{0pt}{0pt}$b(2\abs{E})$   &         &                          &                                             &                             &                          \\
  \cline{2-7}  
\end{tabular}
\caption{Initial simplex tableau, with labeled rows and columns}
\label{tab:initial_tableau}
\end{figure}


A typical iteration starts with some basic variables containing negative elements, and all reduced costs non-negative. For instance, in Fig.~\ref{tab:initial_tableau}, the initial BVs are selected to be slack variables where $S^+_{ij}=\epsilon_{ij}$ and $S^-_{ij}=-\epsilon_{ij}$, hence, there are some negative initial BVs, while all reduced costs are non-negative (as all elements of vector $c$ are non-negative). These two properties are always maintained by the algorithm from one iteration to the next.

The iterations of the algorithm then follow these steps:
\begin{enumerate}
    \item\emph{Check for termination due to optimality:} Examine the elements of zeroth column (which constitutes the basic set). If all of them are non-negative, we have an optimal basic solution and the algorithm terminates.
    \item\emph{Choose pivot row:} Find some $\nu$ such that $[q_{B}]_\nu<0$.
    \item\emph{Check for termination due to unbounded solution:} Considering the $\nu$-th row of the tableau, with elements $r_1,\hdots,r_{2(\abs{V}-1)+3\abs{E}}$, if all the elements of the row are non-negative, the optimal dual cost is $+\infty$ and algorithm terminates. Since the set of minimizers $\cXopt$ in our problem is bounded (see Section~\ref{sec:set-global-optimizers}), this condition is never encountered in our application.
    \item\emph{Choose pivot column:} For each $i$ such that $r_i<0$, compute the ratio $\Bar{c_i}/ |r_i|$ where $\Bar{c_i}$ is the reduced cost of variable $q_i$ and let $j$ be the index of a column that correspond to the smallest ratio. 
    \item\emph{Pivoting:} Remove the variable $[q_B]_\nu$ from the basis, and have variable $q_j$ take its place. Add to each row of the tableau a multiple of the $\nu$-th row (pivot row) so that $r_j$ (the pivot element) becomes $1$ and all other entries of the pivot column become $0$. As a result, the total cost is reduced by the reduced cost $\bar{c}_j$.
    \item  Repeat the algorithm from step 2 until all elements of $q_B$ are non-negative or the algorithm otherwise terminates.
\end{enumerate}

 After solving the simplex tableau, we get the basic optimal solution, which contains non-negative elements, together with non-negative reduced costs. The solution of the dual simplex algorithm is an optimal solution for \eqref{eq:Opt Matrix Form}, and is a corner point of the feasible region (Theorem 2.3, \cite{LP}). If we have multiple optimal solutions (i.e., $\cXopt$ is not a singleton), there will be multiple other corners with the same cost. 

 Hence, it is of interest to computationally enumerate all the corners of $\cXopt$, as discussed next.
 \subsection{Characterizing $\cXopt$ And Verifiability}
 The LP problem \ref{eq:Opt Matrix Form} can have multiple optimal solutions only when two conditions are met \cite{uniqueness}:
\begin{enumerate}
\item There exists a non-basic variable with zero reduced cost. Pivoting this variable into the basis would not change the value for the cost function.
\item There exists a degenerate basic solution, i.e. some basic variables are equal to zero.
\end{enumerate}

If the two conditions above are met, the corners in $\cXopt$ can be enumerated
using a depth first search \cite{dfs}:
\begin{enumerate}
    \item Prepare a queue $Q$ of corners to visit, with the corresponding tableau, and initialize it with the current solution found by the dual simplex algorithm,
    \item For each corner in $Q$ and its associated tableau,
    \begin{enumerate}
    \item Choose $C_{col}$ as the set of columns associated to non-basic variables with zero reduced cost, for all  $j\in C_{col}$,
    \begin{enumerate}
    \item Choose $C_{row}$ as the set of elements of the $j$-th pivot column which are positive,
    \item For $i\in C_{row}$, we perform the pivoting, so that the pivot element in $i$-th row and $j$-th column becomes $1$ and all other entries of the pivot column become~$0$, 
    \item Add the current corner to the queue $Q$, if is not in it already,
       \end{enumerate} 
    \end{enumerate}
    \item Go to step 2 until the queue $Q$ is empty.
\end{enumerate}

\begin{remark}\label{shifitng cost}
In terms of our localization problems, the pivoting variables and the motion from one corner of $\cXopt$ to another can be given a physical interpretation.
We defined as $Z_{ij}$ the cost of edge $(i,j)$. Assuming we have a verifiable graph, from \eqref{eq:Z_verifiable}, the cost of edge $(i,j)$ is equal to $\abs{\epsilon_{ij}}$. When we move (pivot) to another corner with the same cost, the set of basic variables changes, but the value of all the other variables remains the same. So, if a non-basic variable takes the place of basic variables from the set $x^+_V$ or $x^-_V$, it does not produce a new optimal embedding (because such variables where already equal to zero). If a pivoting variable takes the place of non-zero basic variable $Z_{ij}$, then $Z_{ij}$ becomes zero, which means the cost of edge $(i,j)$ changes to zero, and if $\epsilon_{ij}\neq0$ then from \eqref{eq:Z_verifiable}, $x_i$ and $x_j$ are not equal to zero anymore. As the value of cost function remains the same, the loss of cost of edge $(i,j)$ must be compensated with the costs of the rest of the edges. If we pivot a non-basic variable to the non-zero basic variable $S_{ij}^+$ or $S_{ij}^-$, from \eqref{S}, it implies the value of $Z_{ij}$ becomes zero which means the cost of edge $(i,j)$ changes to zero. So, pivoting non-basic variable in order to find alternative solutions means shifting the cost of outliers from one edge to the others.
\end{remark}

There are three cases for the set of optimal solutions, $\cXopt$:
\begin{enumerate}
\item\emph{Uniquely verifiable solution:}
Pivoting new variables to the basis does not result in new corner point; we therefore have a unique optimal solution $\cXopt=\{\vct{0}_{V}\}$, and from \eqref{eq:canonical1} we conclude that the resulting embedding is congruent to the ground truth.
\item\emph{Verifiable (non-unique) solution:} We have multiple optimal solutions, including the origin ($\vct{0}_{V}\in\cXopt$); hence, there are multiple optimal embeddings, with one of them being congruent to the ground truth.
\item\emph{Non-verifiable:} In this case, $\vct{0}_{V}\notin\cXopt$, and the ground truth embedding is not an optimal solution.  
\end{enumerate}
\subsection{Combining Solutions From Multiple Dimensions}
\label{sec:combine-solutions}

In Section~\ref{sec:dimension-reduction}, we reduced one $d$-dimensional optimization problem of the form \eqref{eq:canonical_optimization} to $d$ 1-D optimization problems of the form \eqref{eq:directional_canonical}. Now, we need to combine the optimal solutions of all dimensions to characterize the $d$-dimensional optimal solution. Let $[\cXopt]_k$ represents the set of optimal solutions for the LP \eqref{eq:LP1} of dimension $k$. The value of the cost function \eqref{eq:LP1} is the same for all corner points in $[\cXopt]_k$. Due to this fact, we can pick a 1-D corner point from each set $[\cXopt]_k$, $k\in \{1,\hdots,d\}$, and combine them to build a $d$-dimensional corner point:
\begin{equation}
    x^{opt}=\stack(X^{opt}_1,\hdots, X^{opt}_d), \; X^{opt}_k \in [\cXopt]_k.
\end{equation}
Let $\abs{[\cXopt]_k}$ represents the cardinality of the set $[\cXopt]_k$; then, we have $N=\prod_{k=1}^d \bigl\lvert [\cXopt]_k\bigr\rvert$ $d$-dimensional corner points.  To  have a unique verifiable graph, we therefore need all the individual 1-D problems to be also unique verifiable, i.e. $\abs{[\cXopt]_k}=1$ for all $k\in \{1,\hdots,d\}$.

\subsection{Maximal verifiable components}\label{sec:maximal-components}
If for all corners a subset of components $V'$ in the solution are always zero (i.e., $[X^{opt}_k]_V'=0$ for all $k$), then the position of those particular nodes, and all their relative positions, are congruent to the true embedding. As a consequence, also all their relative costs are the same. Hence, while the entire problem $G,E^\pm_\epsilon$ is not verifiable, the sub-problem $G',E_\epsilon'^\pm$, where $G'=(V',E')$, $E'=\{(i,j)\in E: i,j\in V'\}$ is verifiable. We call the maximal connected components of $G'$ defined in this way the \emph{maximal verifiable components} of $G$.

\section{Verifiability Probability}
Given a tuple $(G,E_\epsilon^\pm)$ of a graph and a signed outlier support, we can define a function that indicate if the associated localization problem is verifiable,
\begin{align}
    \Ver (G,E_\epsilon^\pm)&=\begin{cases} 1&\textrm{ if }0\in\Xopt\\
    0 &\textrm{otherwise}
    \end{cases}
\end{align}
This function can be implemented by using the dual simplex algorithm discussed above.

Given the edge outlier probabilities $p^\pm_E$ defined in \eqref{outliers}, we can take the expectation of $\Ver(G,\cdot)$ over different outlier realizations, and hence characterize the a priori probability of recovering a localization that is cost-equivalent to the true one, without knowing the exact value or support of the outliers.
\begin{definition}
  We define the \emph{verifiability probability} $p_{\Ver}$ as the probability of recovering a solution whose cost is the same as the ground truth, i.e., $p_{\Ver}=\mathbb{E}_{\epsilon}[\Ver(G,E_\epsilon^\pm)]$, where $\mathbb{E}_{\epsilon}[\cdot]$ is the expectation over all the realizations of outliers. 
\end{definition}

The interpretation of this number is the a priori probability that the ground truth embedding $X_V^*$ belongs to $\cXopt$, the set of minimizers of \eqref{eq:l-one-opt}.
For instance, if we assume the edge positive outlier probability is $p^+$, and the edge negative outlier probability is $p^-$, then we can define 
$p(\epsilon_E)=(p^+)^{\abs{E^+_\epsilon}}(p^-)^{\abs{E^-_\epsilon}}(1-p^+-p^-)^{(\mid E \mid-\mid E^-_\epsilon\mid-\mid E^+_\epsilon\mid)}$
and $p_{\Ver}=\mathbb{E}_{\epsilon}[\Ver(E,E_\epsilon^\pm)]$.

Note that an analogous quantity could be computed for unique verifiability, although we would need to expand our results to make this rigorous (see comments immediately after Theorem~\ref{theorem_verifiability}).
Moreover, a similar concept could be extended to each individual edge, or any arbitrary subset of edges, by asking whether they are part of a maximal verifiable component (Section~\ref{sec:maximal-components}). Nonetheless, a formal exploration of these concepts is out of the scope of the present paper.
\section{Numerical Examples}
In this section we apply our theory and algorithm\footnote{The algorithm is implemented in MATLAB at thttps://github.com/Mahrooo/Robust-Localization-Verifiability.git} to a simple graph with 5 nodes and 10 edges, $\vct{X}_V \in \mathbb{R}^{5 \times 2} $ (Fig. ~\ref{fig:exp1}).
We start with the case where three relative measurements in first coordinate are outliers 
 and all other measurements (Fig. ~\ref{fig:exp1true})
are accurate. In this example, positive and negative outlier have the same probability $p^+=p^-=\frac{1}{2}p$.
After solving the optimization problem associated to this graph, we find three different embeddings that represent the corners of $\cXopt$; these are shown in Fig. ~\ref{fig:exp1c1}, \ref{fig:exp1c2} and \ref{fig:exp1c3}.
\begin{figure}[t]
\centering
\subfloat[Ground truth embedding]{\label{fig:exp1true}\includegraphics[scale=0.6]{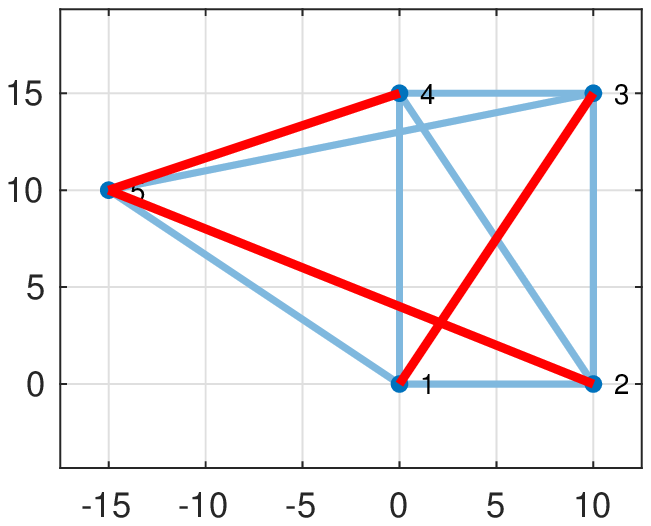}} 
\hfill
\subfloat[Optimal embedding $\cXopt_1$]{\label{fig:exp1c1}\includegraphics[scale=0.6]{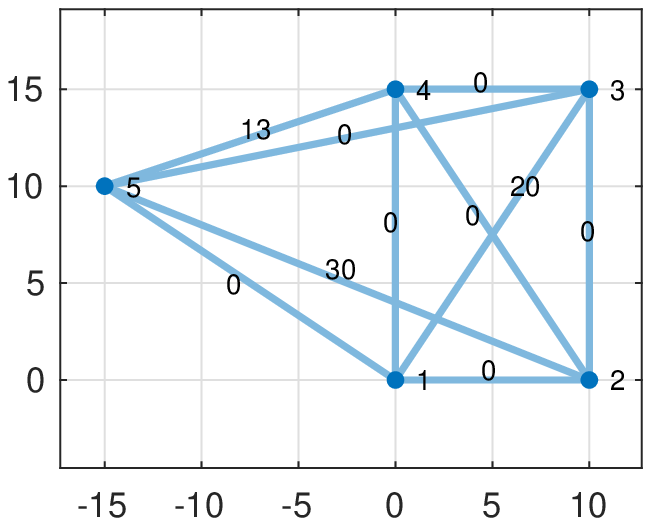}}\\
\end{figure}
\\
\begin{figure}[t]
\subfloat[Optimal embedding $\cXopt_2$]{\label{fig:exp1c2}\includegraphics[scale=0.6]{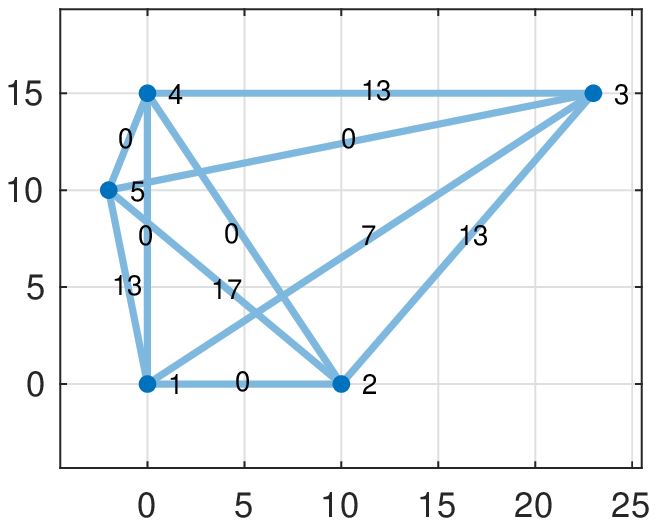}}
\hfill
\subfloat[Optimal embedding $\cXopt_3$]{\label{fig:exp1c3}\includegraphics[scale=0.6]{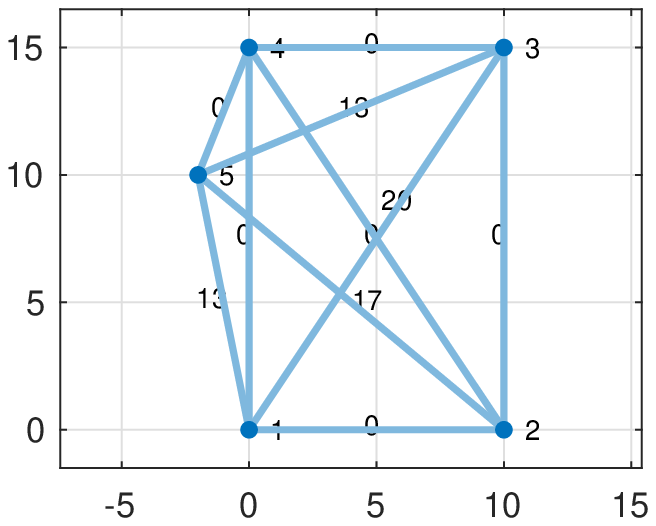}}
\caption{Verifiable graph with 5 nodes and 10 edges, 3 edges are outliers are shown by red color in Fig. ~\ref{fig:exp1true}, the cost of each edge is shown on each and the cost of optimal solution for all embeddings are equal to the cost of the ground truth embedding which is 63}
\label{fig:exp1}
\end{figure}
In Fig. ~\ref{fig:exp1c1}, the resulted embedding is identical to the ground truth embedding, which means that $\vct{x}_V\in\cXopt$, and the graph is verifiable. However, since we have multiple solution, the graph is not uniquely verifiable. In the figures, the cost of associated to each edge is shown; it can be seen that different corners shift the cost to different edges, although their sum remains the same. The locations of nodes $V'=\{1,2,4\}$ are identical to their ground truth locations,
and the costs of edges $E'=\{(4,1),(2,1),(2,4)\}$ remain the same in all embeddings, so the subgraph $G=(V',E')$ is a maximal verifiable component.

Assuming that the edge outlier probability $p_{ij}$ is $\frac{1}{2} p$ for all edges $\ijE$, then for our graph in this example the verifiability probability for this graph can be evaluated as 
\begin{equation}
 \begin{aligned}
& p_{\Ver}=(1-p)^{10}+ 20(\frac{p}{2})(1-p)^9+180(\frac{p}{2})^2(1-p)^8\\
& +920(\frac{p}{2})^3(1-p)^7+2680(\frac{p}{2})^4(1-p)^6+4524(\frac{p}{2})^5(1-p)^5\\
&  +4560(\frac{p}{2})^6(1-p)^4+2820(\frac{p}{2})^7(1-p)^3\\
&  +1080(\frac{p}{2})^8(1-p)^2+240(\frac{p}{2})^9(1-p)+24(\frac{p}{2})^{10},
\end{aligned}
\end{equation}
where the coefficients come from Table \ref{tab:edge-support-cases}.
\begin{figure}[b]
    \centering
    \includegraphics[scale=0.6]{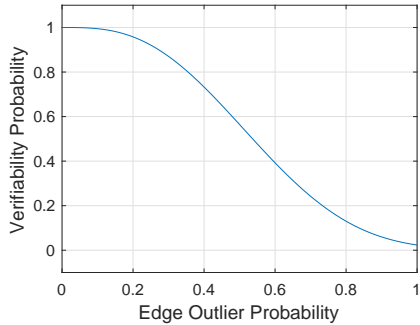}
    \caption{Verifiability probability for the graph in Fig ~\ref{fig:exp1}}
    \label{fig:probability}
\end{figure}
As shown in Fig. \ref{fig:probability}, if $p=0$ we have a verifiable graph with probability $p_{\Ver}=1$; as we increase the probability of more edges to be outliers, the probability of having access to the verifiable graph decreases.


\begin{table}[t]
   \centering
   \begin{tabular}{p{0.25\linewidth}p{0.25\linewidth}p{0.25\linewidth}}
   \toprule
    \#outliers, $\abs{E_\epsilon^\pm}$ & \#possible combinations, $\abs{E} \choose \abs{E_\epsilon^\pm}$ &  \#verifiable combinations\\
    \midrule
    0 & 1 &1\\
    1 & 20 &20\\
    2 & 180 &180\\
    3 & 960 &920\\
    4 & 3360 &2680\\
    5 & 8064 &4524 \\
    6 & 13440 &4560\\
    7 & 15360 &2820 \\
    8 & 11520  & 1080\\
    9 & 5120  & 240\\
    10 & 1024  & 24\\
    \bottomrule
   \end{tabular}
   \caption{Verifiability analysis for all possible cases of outlier supports $E_\varepsilon^\pm$}
   \label{tab:edge-support-cases}
\end{table}
\section{Conclusions And Future Works}
In this work, we consider the estimation of an embedding for nodes with relative translation measurements affected by outliers (but no noise) through the minimization of an $\ell_1$-norm cost function. We introduce the notion of verifiability, which characterizes when we can expect to recover a solution with cost equal to the true one; we show that the concept of verifiability depends only on the topology of the network and where the outliers are placed, and we also provide a way to compute it using the dual simplex method. From a more practical standpoint, we define the verifiability probability, which characterizes the a priori reliability that can be expected from a given measurement graph (given a priori probabilities of outliers for each edge). There are many possible directions for our future work. First, we plan to include the effects of amplitude-limited noise to our measurement models, and study its effect of noise on our results; concurrently, we will study different cost functions, such as the Huber-loss function and piece-wise linear loss functions.



\addtolength{\textheight}{-10cm}

\bibliographystyle{ieee}
\bibliography{biblio/IEEEConfFull,biblio/IEEEFull,biblio/OtherFull,references}

\end{document}